\newcommand*\Let[2]{\State #1 $\gets$ #2}
\algrenewcommand\algorithmicrequire{\textbf{Precondition:}}
\algrenewcommand\algorithmicensure{\textbf{Postcondition:}}
\newtheorem{lemma}{Lemma}
\newcommand{\paren}[1]{\left(#1\right)}
\newcommand{\bracket}[1]{\left[#1\right]}
\renewcommand{\brace}[1]{\left\{#1\right\}}
\renewcommand{\ang}[1]{\left\langle#1\right\rangle}
\newcommand{\R}{\mathbb{R}}
\newcommand{\C}{\mathbb{C}}
\newcommand{\F}{\mathbb{F}}
\newcommand{\Z}{\mathbb{Z}}
\newcommand{\ring}{\mathcal{R}}
\newcommand{\M}[1]{\begin{bmatrix}#1\end{bmatrix}}
\newcommand{\MA}[2]{\bracket{\begin{array}{#1}#2\end{array}}}
\newcommand{\rk}[1]{\mathrm{rk}\paren{#1}}
\newcommand{\brk}[2]{\underline{\mathrm{rk}}_{#2}\paren{#1}}
\newcommand{\borderk}[1]{\underline{\mathrm{rk}}\paren{#1}}
\renewcommand{\dim}[1]{\mathrm{dim}\paren{#1}}
\newcommand{\rref}[1]{\mathrm{rref}\paren{#1}}
\newcommand{\GL}[2]{\mathrm{GL}\paren{#1,\ #2}}
\renewcommand{\~}[1]{\widetilde{#1}}
\renewcommand{\Vec}[1]{\mathrm{vec}\paren{#1}}
\renewcommand{\span}{\mathrm{span}}
\newcommand{\cpdeval}[1]{\bracket{\!\bracket{#1}\!}}
\newcommand{\unfold}[2]{#1_{\paren{#2}}}
\def\@biblabel#1{}
\def\@makefnmark{\hbox{\@textsuperscript{\normalfont\@thefnmark}}}}
\title{Depth-first search for tensor rank and border rank over finite fields}
\author{Jason Yang}
\date{}
\begin{document}

\maketitle

\begin{abstract}
We present an $O^*\paren{|\F|^{(R-n_*)\paren{\sum_d n_d}+n_*}}$-time algorithm for determining whether a tensor of shape $n_0\times\dots\times n_{D-1}$ over a finite field $\F$ has rank $\le R$, where $n_*:=\max_d n_d$; we assume without loss of generality that $\forall d:n_d\le R$.
We also extend this problem to its border rank analog, i.e., determining tensor rank over rings of the form $\F[x]/(x^H)$, and give an $O^*\paren{|\F|^{H\sum_{1\le r\le R} \sum_d \min(r,n_d)}}$-time algorithm.
Both of our algorithms use polynomial space.
\end{abstract}

\section{Introduction}
Over an arbitrary ring $\ring$ (i.e., a set with addition, multiplication, additive identity, and multiplicative identity, but not necessarily division), a \textit{rank-$R$ canonical polyadic decomposition (CPD)} of a tensor (i.e., multidimensional array) $T\in \ring^{n_0\times\dots\ n_{D-1}}$, is a list of \textit{factor matrices} $A_d\in\ring^{n_d\times R},\ 0\le d<D$, such that \[\forall i_0,\dots,i_{D-1}:\ T_{i_0,\dots,i_{D-1}} = \sum_{0\le r<R} \prod_{0\le d<D} (A_d)_{i_d,r}.\]

This relation is commonly notated as $T=\cpdeval{A_0,\dots,A_{D-1}}$ or $T=\sum_{0\le r<R} \bigotimes_d (A_d)_{:,r}$. The rank of the tensor $T$ is the smallest $R$ such that there exists a rank-$R$ CPD of $T$.

Tensor rank is the central problem underlying fast matrix multiplication (MM), the main performance bottleneck of most linear algebra operations, backpropagation, various graph algorithms, and many other problems. 
Formally, the action of multiplying a $m\times k$ matrix with a $k\times n$ matrix can be described with a $mk\times kn\times nm$-shaped tensor $\mathcal{M}_{\ang{m,k,n}}$; then a rank-$R$ CPD of this tensor yields an $O\paren{N^{3\log_{mkn} R}}$-time algorithm for multiplying two $N\times N$ matrices, for arbitrarily large $N$ \citep{blaser}.
Thus, finding CPDs of $\mathcal{M}_{\ang{m,k,n}}$ tensors with smaller rank than previously known would lead to faster algorithms for MM, which makes tensor rank an important problem.

Usually $\ring$ is the real numbers ($\R$) or complex numbers ($\C$), but setting it to a finite field can also be useful. For example, one can find a CPD of a tensor over a small finite field (often the integers mod 2), then lift to a larger ring such as $\Z$; this is a common technique in fast MM \citep{alphatensor, flip}, since CPDs over the integers are the most desired, for reasons such as reducing floating-point errors.
Additionally, restricting $\ring$ to a finite field gives some hope that obtaining an CPD is even possible, which other rings do not share: for example, a CPD using real numbers usually cannot be represented exactly; and finding tensor rank over the integers is known to be undecidable \citep{shitov}. For these reasons, we are interested in working over finite fields.

We also want to broaden our scope to analyzing arbitrary tensors, not just MM tensors, as we want to know if general algorithmic speedups exist. Additionally, tensor rank has applications outside of fast MM, such as the light bulb problem \citep{light} and dynamic programming \citep{tensor-dp}.

Thus, our goal is to solve the following problem: given a finite field $\F$, an arbitrary tensor $T\in\F^{n_0\times\dots\times n_{D-1}}$, and rank threshold $R$, return a CPD of $T$ of rank $\le R$ if one exists, or prove that no such CPD exists.
Although this problem is NP-complete \citep{hastad}, it may still be possible and useful to attain exponential speedup over brute force.

We require our algorithm to have guaranteed correctness, which most previous work does not satisfy: for example, \citep{courtois, heule, flip} use random search, and \citep{alphatensor} uses reinforcement learning, none of which can prove that a CPD of specific rank does not exist.

To this end, we present an $O^*\paren{|\F|^{(R-n_*)\paren{\sum_d n_d}+n_*}}$-time polynomial-space algorithm for this problem, where $O^*$ suppresses polynomial factors and $n_*:=\max_d n_d$; here we assume WLOG that $\forall i: n_i\le R$. For fixed $R$ the time complexity is bounded above by $O^*\paren{|\F|^{
D\cdot \frac{(R+\frac{1}{D})^2}{4}
}}$.

We also investigate \textit{border} CPDs of exponent $H$, which are CPDs of a tensor $x^{H-1} T$ over the ring $\F[x]/(x^H)$, where all elements in $T$ are in $\F$; such CPDs are also widely used in fast MM (more detail in Section \ref{border}).
For this problem, we present an $O^*\paren{|\F|^{H\sum_{1\le r\le R} \sum_d \min(r,n_d)}}$-time polynomial-space algorithm; for fixed $R$ the time complexity is bounded above by $O^*\paren{|\F|^{HD \cdot \frac{R(R+1)}{2}}}$.

The code for our algorithms is available at \url{https://github.com/coolcomputery/tensor-cpd-search}.

\subsection{Notation}
\begin{itemize}
    \item Tensor slices are denoted with NumPy notation.
    \begin{itemize}
        \item for a two-dimensional tensor $A$: $A_{r,:}$ is the $r$-th row of $A$; $A_{:,:c}$ denotes the submatrix containing the first $c$ columns of $A$.
    \end{itemize}

    \item The flattening of a tensor $T$ is denoted as $\Vec{T}$, where elements are ordered in some consistent manner (e.g., row-major).

    \item The axis-$d$ (``mode-$d$") unfolding of a tensor $T$ is $\unfold{T}{d}:=\MA{c}{\vdots\\\hline \Vec{T_{:,\dots,:, i_a, :,\dots,:}} \\\hline \vdots}_{i_a}$.
    
    \item The tensor product of tensors $A$ and $B$
    is $A\otimes B:=\M{A_{i_0,\dots,i_{D-1}} B_{j_0,\dots,j_{D'-1}}}_{i_0,\dots,i_{D-1}, j_0,\dots,j_{D'-1}}$.
    
    \item The axis-$d$ operation of $T\in\ring^{n_0\times\dots\times n_{D-1}}$ by a matrix $M\in\ring^{n'_d \times n_d}$
    is
    
    $M\times_d T:=
    \M{\sum_{i_d} M_{i'_d,i_d} T_{_{i_0,\dots,i_{D-1}}}}_{i_0,\dots,i_{d-1}, i'_d, i_{d+1},\dots,i_{D-1}}$.


    \item A CPD with rank $R$ (``rank-$R$ CPD") is a list of factor matrices $A_d\in\ring^{n_d\times R},\ 0\le d<D$ that evaluates to the tensor $\cpdeval{A_0,\dots,A_{D-1}}:=\M{\sum_{0\le r<R} \prod_d (A_d)_{i_d,r}}_{i_0,\dots,i_{D-1}}
    =\sum_r \bigotimes_d (A_d)_{:,r}$.

    \item The rank of a tensor $T$ is the smallest $R$ such that $T$ has a rank-$R$ CPD $\cpdeval{A_0,\dots,A_{D-1}}$, and is denoted as $\rk{T}$. When $T$ is a matrix, this definition coincides with matrix rank.

    \item The phrase ``rank-$\le R$ CPD" is short for ``a CPD of rank at most $R$."

    \item The reduced row echelon form of a matrix $M$ is denoted as $\rref{M}$.

    \item $\GL{n}{\F}$ denotes the set of invertible $n\times n$ matrices with elements in a field $\F$.
\end{itemize}

\section{Axis-reduction}
\label{axis-reduce}
Na\"ively, finding a rank-$\le R$ CPD of a tensor $T\in\F^{n_0\times\dots\times n_{D-1}}$ using brute force would take $O^*\paren{|\F|^{R\sum_d n_d}}$ time, as such a CPD $\cpdeval{A_0,\dots,A_{D-1}}$ contains $Rn_0+\dots+Rn_d$ many variables, each with $|\F|$ many possible values.

By itself, this running time is not very good because each $n_i$ could be arbitrarily large, even if $R$ is small.
Here, we show that one can always transform $T$ so that $\forall d:n_d\le R$ without affecting the existence of a solution.

For each axis $d$, define the following:
\begin{itemize}
    \item the axis-$d$ unfolding of $T$, denoted as $\unfold{T}{d}$, a matrix whose $i$-th row is the flattening of the slice $T_{\underbrace{:,\dots,:}_d,i,:,\dots,:}$, where all slices are flattened in the same manner
    
    \item an arbitrary change-of-basis matrix $Q_d\in\GL{n_d}{\F}$ such that $Q_d \unfold{T}{d} = \rref{\unfold{T}{d}}$
    
    \item the axis-$d$ rank of $T$, $r_d:=\rk{\unfold{T}{d}}$
\end{itemize}

Then we create the new tensor $T'=(Q_{D-1})_{:r_{D-1},:}\times_{D-1} \paren{\dots \paren{(Q_0)_{:r_0,:}\times_0 T}}$, which has shape $r_0\times\dots\times r_{D-1}$. We call this process \textit{axis-reduction}; note that each $Q_d$ can be found using Gaussian elimination, so axis-reduction runs in polynomial time.


We have $\rk{T}=\rk{T'}$, since $T=(Q_0^{-1})_{:,:r_0}\times_0 \paren{\dots\paren{(Q_{D-1}^{-1})_{:,:r_{D-1}}\times_{D-1} T}}$ and doing axis operations $\times_d$ does not increase CPD rank.
In fact, any CPD of $T=\cpdeval{A_0,\dots,A_{D-1}}$, can be transformed into an equal-rank CPD of $T'=\cpdeval{(Q_0)_{:r_0,:}A_0,\dots,(Q_{D-1})_{:r_{D-1},:}A_{D-1}}$, and vice versa; so we can search for a CPD over $T'$ and then transform it back to a CPD for $T$.

The upshot is that if $T$ has a rank-$\le R$ CPD $\cpdeval{A_0,\dots,A_{D-1}}$, it must satisfy $\forall d:r_d\le R$: this is because each axis-$d$ slice $T_{:,\dots,:,i,:,\dots,:}$ would equal $\sum_r (A_d)_{i,r}\paren{\bigotimes_{d'\ne d} (A_{d'})_{:,r}}$, so $\span\brace{T_{:,\dots,:,i,:,\dots,:}}_{0\le i<n_d} \subseteq \span\brace{\bigotimes_{d'\ne d} (A_{d'})_{:,r}}_{0\le r<R}$. Therefore, if any $r_d$ exceeds $R$, we immediately know that a rank-$R$ CPD of $T$ does not exist.

Already, axis-reduction ensures that brute force runs within $O^*\paren{|\F|^{DR^2}}$ time, because after replacing $T$ with its axis-reduced version, every $n_d$ is at most $R$.
A more clever observation is that if we fix all first columns $(A_d)_{:,0}$ of the factor matrices, then $T-\bigotimes_d (A_d)_{:,0}$ must have rank $\le R-1$, so we can apply axis-reduction again and recurse in a depth-first search fashion.
Because $\forall d: n_d\le R$ after axis-reduction, the root call, which has rank threshold $R$, has at most $|\F|^{DR}$ many children calls, each of which has rank threshold $R-1$.
Continuing this observation all the way down to calls with rank threshold 0, we end up making $O\paren{|\F|^{DR}\cdots |\F|^{D\cdot 1}}=O\paren{|\F|^{D(R+\dots+1)}}=O\paren{|\F|^{D\cdot \frac{R(R+1)}{2}}}$ total calls. Since each call takes polynomial work, this depth-first-search algorithm runs within $O^*\paren{|\F|^{D(R+\dots+1)}}=O^*\paren{|\F|^{D\cdot \frac{R(R+1)}{2}}}$ time, significantly faster than brute force.
We use essentially the same algorithm to search for border CPDs in Section \ref{border}.

Finally, we note that axis-reduction can efficiently find rank-$\le 1$ CPDs, using the fact that $1^D=1$:
\begin{lemma}
\label{rk1}
For any tensor $T\in\F^{n_0\times\dots\times n_{D-1}}$, $\rk{T}\le 1$ if and only if $\forall d: \rk{\unfold{T}{d}}\le 1$. Furthermore, if this condition is satisfied, a minimum-rank CPD of $T$ can be found in polynomial time.
\end{lemma}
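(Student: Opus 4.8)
The plan is to prove both directions and to read off the claimed polynomial-time algorithm from the constructive backward direction, relying entirely on the axis-reduction machinery already set up in Section~\ref{axis-reduce}.

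For the ``only if'' direction I would note that if $\rk{T}\le 1$ then either $T=0$, in which case every $\unfold{T}{d}=0$ has rank $0$, or $T=\bigotimes_d a_d$ for vectors $a_d\in\F^{n_d}$. In the latter case each row of $\unfold{T}{d}$ is a scalar multiple (by an entry of $a_d$) of the single fixed vector $\Vec{\bigotimes_{d'\ne d} a_{d'}}$, so $\rk{\unfold{T}{d}}\le 1$. This is just the special case $R=1$ of the span-containment argument already given in Section~\ref{axis-reduce}.

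For the ``if'' direction, suppose $\forall d:\rk{\unfold{T}{d}}\le 1$. First observe that $\unfold{T}{d}=0$ for some $d$ iff $T=0$ iff $\unfold{T}{d}=0$ for all $d$, since every entry of $T$ occurs in every unfolding. So either $T=0$, and then $\rk{T}=0$ is witnessed by the empty CPD, which is minimum; or every $r_d:=\rk{\unfold{T}{d}}$ equals exactly $1$. In the latter case axis-reduction produces a tensor $T'$ of shape $1\times\cdots\times 1$, i.e.\ a single scalar $c\in\F$, and $T'\ne 0$ because reversing the axis operations (Section~\ref{axis-reduce}) would otherwise force $T=0$. A $1\times\cdots\times 1$ tensor with nonzero entry $c$ has the rank-$1$ CPD $\cpdeval{[c],[1],\dots,[1]}$ (the identity $1^{D-1}=1$ is exactly what makes this single term evaluate to $c$), so $\rk{T'}\le 1$; since axis-reduction preserves rank and a rank-$1$ CPD of $T'$ pulls back to a rank-$1$ CPD of $T$ via the matrices $(Q_d^{-1})_{:,:r_d}$, we get $\rk{T}\le 1$, and because $T\ne 0$ forces $\rk{T}\ge 1$ this CPD is minimum-rank.

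For the running time, computing each $Q_d$ by Gaussian elimination, applying the axis operations to form $T'$, and pulling the rank-$1$ CPD of $T'$ back to $T$ are all polynomial-time, so the whole procedure — detect which case holds, then output the empty CPD or the pulled-back rank-$1$ CPD — runs in polynomial time. I do not expect a genuine obstacle here; the only points needing a little care are the equivalences $T'=0\iff T=0$ and the validity of the factor-matrix pullback, both of which follow immediately from the invertibility of the $Q_d$ and the reversibility statement already established in Section~\ref{axis-reduce}.
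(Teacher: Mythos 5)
Your proof is correct and follows exactly the approach the paper gestures at (it states Lemma~\ref{rk1} without a written proof, but the preceding sentence — ``axis-reduction can efficiently find rank-$\le 1$ CPDs, using the fact that $1^D=1$'' — is precisely the argument you spell out: reduce to a $1\times\cdots\times 1$ scalar $c$, realize $c$ as the rank-$1$ CPD $\cpdeval{[c],[1],\dots,[1]}$, and pull back through the invertible $Q_d$). No gaps.
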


\section{Tensor rank}
Surprisingly, for regular CPD, it is possible to achieve our desired running time without explicitly using recursion, while only using axis-reduction once.



\label{rref-trick}

\subsection{Algorithm}
Assume that $T\in\F^{n_0\times\dots\times n_{D-1}}$ is axis-reduced and suppose a rank-$R$ CPD $T=\cpdeval{A_0,\dots,A_{D-1}}$ exists; note that this implies $\forall d: n_d\le R$.

Construct some $Q\in\GL{n_0}{\F}$ such that $QA_0=\rref{A_0}$. Because $\unfold{T}{0}=A_0\MA{c}{\vdots\\\hline \Vec{\bigotimes_{d'\ne 0} (A_{d'})_{:,r}}\\\hline \vdots}_r$, rank bounds imply that $\rk{A_0}=n_0$, so we can assume $\rref{A_0}=\MA{c|c}{I_{n_0} & X}$ for some arbitrary matrix $X$, after simultaneously permuting the columns of all factor matrices appropriately.

Denote $\~{T}:=T - \cpdeval{(A_0)_{:,n_0:},\cdots,(A_{D-1})_{:,n_0:}}$; then $Q\times_0 \~{T}=\cpdeval{I_{n_0},(A_0)_{:,:n_0},\cdots,(A_{D-1})_{:,:n_0}}$. For brevity, denote the right-hand side as $\Xi$.



Suppose $\~{T}$ is fixed: to solve for $Q$, note that every axis-$0$ slice of $\Xi$ has rank $\le 1$, but otherwise, we have complete freedom over $\Xi$ with respect to $\~{T}$. Thus, it is necessary and sufficient for the rows of $Q$ to be linearly independent and to each satisfy $\rk{Q_{i,:}\times_0 \~{T}}\le 1$ for all $i$. We can find such a set of rows by extracting a basis subset of $S:=\brace{v\in\F^{1\times n_0} : \rk{v\times_0\~{T}}\le 1}$ using Gaussian elimination, or determining that such a basis does not exist.

If we do find a satisfying $Q$, then by Lemma \ref{rk1} we can find a rank-$\le 1$ CPD of each $Q_{i,:}\times_0 \~{T}$ in polynomial time; join these CPDs and transform the first factor matrix by $Q^{-1}$ to create a rank-$\le R$ CPD of $\Xi$; and join with $\paren{(A_0)_{:,n_0:},\cdots,(A_{D-1})_{:,n_0:}}$ to create a rank-$\le R$ CPD of $T$. All that remains is to enumerate all possible tensors $\~{T}$, which can be done by enumerating $\paren{(A_0)_{:,n_0:},\cdots,(A_{D-1})_{:,n_0:}}$.






\subsection{Complexity}

There are $|\F|^{(R-n_0)\paren{\sum_d n_d}}$ many possible assignments of $\paren{(A_0)_{:,n_0:},\cdots,(A_{D-1})_{:,n_0:}}$ to fix, and thus that many $\~{T}$ to iterate over.

For each $\~{T}$, when we solve for $Q$: enumerating each $v$ takes $|\F|^{n_0}$ time; checking $\rk{v\times_0\~{T}}\le 1$ takes polynomial time by Lemma \ref{rk1}; and $\dim{\span\brace{S}}=\rk{\MA{c}{\vdots\\\hline v \\\hline \vdots}_{v\in S}}$, which can be calculated in polynomial time using Gaussian elimination. Thus, this subroutine takes $O^*\paren{|\F|^{n_0}}$ time.

Overall, our algorithm runs in time $O^*\paren{|\F|^{(R-n_0)\paren{\sum_d n_d}+n_0}}$.
We can improve this time complexity by permuting the axes of $T$ so that $n_0$ is the longest shape length of $T$, allowing us to replace $n_0$ in the expression with $n_*:=\max_d n_d$.

The pseudocode for this algorithm is shown in Algorithm \ref{rref-alg}, which is implemented to consume polynomial space. One can modify the algorithm to more closely resemble depth-first search, e.g., by fixing tuples of columns $\paren{(A_0)_{:,r},\cdots,(A_{D-1})_{:,r}}$ for each $n_0\le r<R$ at a time.

\begin{algorithm}
    \setstretch{1}
    \caption{Search algorithm for rank-$\le R$ CPD over a finite field $\F$}
    \label{rref-alg}
    \begin{algorithmic}[1]
    \newcommand{\target}{\mathrm{target}}
    \Require{
        $T\in\F^{r_0\times\dots\times r_{D-1}}$, $T$ axis-reduced,
        $R\in\Z_{\ge 0}$
    }
    \Ensure{
        returns a rank-$\le R$ CPD of $T$, if one exists; else, returns null
    }
    \Function{rref\_search\_help}{$T, R$}
        \For{$0\le d'<D,\ V_{d'}\in\F^{r_{d'}\times (R-r_0)}$}
            \Let{$\~{T}$}{$T-\cpdeval{V_0,\dots,V_{D-1}}$}
            \For{$0\le d<D$}
                \Let{$U_d$}{$[]$}
            \EndFor
            \Let{$Q$}{$[]$}
            \Let{$x$}{0}
            \For{$v\in\F^{1\times r_0}$}
                \Let{$t, u_1,\dots,u_{D-1}, s_1,\dots,s_{D-1}$}{\Call{axis\_reduce}{$(v\times_0 T)_{0,:,\dots,:}$}}
                \If{$(\forall d>1: s_d\le 1)$ \textbf{and} $\rk{\MA{c}{Q\\\hline v}}>\rk{Q}$}
                    \Let{$Q$}{$\MA{c}{Q\\\hline v}$}
                    \If{$\forall d>1: s_d>0$}
                        \Let{$u_0$}{$t_{0,\dots,0}e_x$} \Comment{$e_x$ is the $n_0$-long vector with a 1 at index $x$ and 0s everywhere else}
                        \For{$0\le d<D$}
                            \Let{$U_d$}{$\MA{c|c}{U_d & u_d}$}
                        \EndFor
                    \EndIf
                    \Let{$x$}{$x+1$}
                \EndIf
            \EndFor
            \If{$x=n_0$}
                \State \Return{$\paren{
                    Q^{-1}\MA{c|c}{U_0&V_0},
                    \MA{c|c}{U_1&V_1},
                    \dots,
                    \MA{c|c}{U_{D-1}&V_{D-1}}
                }$}
            \EndIf
        \EndFor
      \State \Return{\textbf{null}}
    \EndFunction

    \Statex
    \Require{
        $T_\target\in\F^{n_0\times\dots\times n_{D-1}}$,
        $R\in\Z_{\ge 0}$
    }
    \Ensure{
        returns a rank-$\le R$ CPD of $T_\target$, if one exists; else, returns null
    }
    \Function{rref\_search}{$T_\target, R$}
        \Let{$T, Q_0,\dots,Q_{D-1}, r_0,\dots,r_{D-1}$}{\Call{axis\_reduce}{$T_\target$}}
        \If{$\exists d | r_d>R$}
            \State \Return{\textbf{null}}
        \EndIf
        \If{$\exists d | r_d=0$}
            \State \Return{$\paren{\M{\ }}_{0\le d<D}$}
        \EndIf
        \Let{$d^*$}{$\arg\max_d r_d$}
        \Let{$A$}{\Call{rref\_search\_help}{$\M{T_{i_0,\dots,i_{D-1}}}_{i_{d^*},i_1,\dots,i_{d^*-1},i_0,i_{d^*+1},\dots,i_{D-1}},\ R$}} \Comment{swap axes 0 and $d^*$}
        \If{$A\ne\textbf{null}$}
            \For{$0\le d<D$}
                \Let{$U_d$}{$\paren{Q_d^{-1}}_{:,:r_d} A_d$}
            \EndFor
            \State \Return{ $\paren{
                U_{d^*},
                U_1,\dots,U_{d^*-1},
                U_0,
                U_{d^*+1},\dots,U_{D-1}
            }$ }
        \EndIf
        \State \Return \textbf{null}
    \EndFunction
  \end{algorithmic}
\end{algorithm}

\section{Border rank}
\label{border}
\subsection{Motivation and Definitions}
Intuitively, a border CPD is a parameterized CPD that approximates a target tensor $T$ arbitrarily well. Such a CPD may have a strictly smaller rank than $\rk{T}$: a famous example is $T=\M{\M{0&1\\1&0}&\M{1&0\\0&0}}$, which has rank 3 but is approximated by the rank-2 border CPD $\frac{1}{x}\paren{\M{1\\x}^{\otimes 3} - \M{1\\0}^{\otimes 3}}$ as $x\rightarrow 0$.


To formally define border CPDs, we multiply everything by a high enough power of $x$ so that everything is a polynomial in $x$. Define the following for a tensor with elements in $\F$ \citep{blaser}:
\begin{itemize}
    \item $\brk{T}{H} := \rk{x^{H-1} T}$ over the polynomial ring $\F[x]/(x^H)$; we call $H$ the \textit{exponent threshold}.
    \item $\borderk{T} := \min_{H\ge 1} \brk{T}{H}$; this is the \textit{border rank} of $T$.
\end{itemize}

Border rank is useful in fast MM because one can essentially pretend it is regular rank for time complexity purposes.
Recall from the introduction that finding a rank-$R$ CPD of the $\mathcal{M}_{\ang{m,k,n}}$ tensor leads to a $O\paren{N^{3\log_{mkn} R}}$-time algorithm for multiplying two $N\times N$ matrices;
it turns out this is also true for a \textit{border} rank-$R$ CPD (with arbitrary exponent threshold $H$), up to an extra $O\paren{N^\varepsilon}$-factor in the running time for arbitrarily small $\varepsilon>0$ \citep{blaser}.
Because border rank allows extra freedom over regular tensor rank, due to multiples of $x^H$ vanishing, most tensors (including MM tensors) usually have border CPDs with strictly lower rank than their tensor rank, making it easier to develop asymptotically fast algorithms for MM.

\subsection{Border depth-first search}
To make the search problem finite, we fix the exponent threshold $H$. We also generalize the problem to finding the CPD rank of a tensor with elements in the ring $\mathcal{R}:=\F[x]/(x^H)$, i.e., elements are not restricted to $\F$-multiples of $x^{H-1}$. We call $\mathcal{R}$ the \textit{border ring on field $\F$ with exponent threshold $H$}.

We can use the same depth-first search algorithm described in Section \ref{axis-reduce} to search for border CPDs; all we need is an efficient border analog of axis-reduction. It suffices to generalize \textit{matrix rank factorization} to border rings:

\begin{center}
Given $M\in\mathcal{R}^{m\times n}$, find $U\in\mathcal{R}^{m\times r}, V\in\mathcal{R}^{r\times n}$ such that $M=UV$ and $r$ is minimized.
\end{center}

Before proceeding, we use the following lemma about multiplicative inverses:
\begin{lemma}
An element $\alpha\in\F[x](x^H)$ has a multiplicative inverse if and only if it is not a multiple of $x$. Furthermore, such an inverse is unique and can be found in $O(H^2)$ time.
\end{lemma}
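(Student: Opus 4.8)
The plan is to work with the ring $\ring = \F[x]/(x^H)$ and analyze which elements are units. First I would observe that every element $\alpha\in\ring$ has a unique representative $\alpha = \sum_{j=0}^{H-1} a_j x^j$ with $a_j\in\F$; I would split into the two cases according to whether $a_0 = 0$ or not.

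For the ``only if'' direction: if $\alpha$ is a multiple of $x$, write $\alpha = x\beta$. Then for any $\gamma\in\ring$, $\alpha\gamma = x\beta\gamma$ is a multiple of $x$, hence its constant coefficient is $0$, so $\alpha\gamma\ne 1$. Therefore a multiple of $x$ cannot be a unit. For the ``if'' direction: suppose $a_0\ne 0$. Write $\alpha = a_0(1 - x\delta)$ where $\delta := -a_0^{-1}\sum_{j=1}^{H-1} a_j x^{j-1}\in\ring$ and $a_0^{-1}$ is the inverse in the field $\F$. Since $x^H = 0$ in $\ring$, the element $x\delta$ is nilpotent with $(x\delta)^H = 0$, so the geometric series terminates: $(1 - x\delta)^{-1} = \sum_{k=0}^{H-1}(x\delta)^k$. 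Hence $\alpha^{-1} = a_0^{-1}\sum_{k=0}^{H-1}(x\delta)^k$, establishing that $\alpha$ is a unit.

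Uniqueness of the inverse is the standard ring-theoretic fact: if $\alpha\beta = \alpha\beta' = 1$ then $\beta = \beta(\alpha\beta') = (\beta\alpha)\beta' = \beta'$. For the running-time claim, I would note that multiplication of two elements of $\ring$ (truncated polynomial multiplication mod $x^H$) costs $O(H^2)$ field operations, and the inverse can be computed either by summing the $H$ terms of the truncated geometric series $a_0^{-1}\sum_{k=0}^{H-1}(x\delta)^k$ — where the partial products stabilize, so only $O(H)$ multiplications by $x\delta$ (each $O(H)$ work since $x\delta$ has a sparse leading structure, or $O(H^2)$ naively over the whole range, still within budget) are needed — or, more simply, by solving the linear system for the coefficients of $\beta$ via forward substitution on the lower-triangular Toeplitz system induced by $\alpha$, which is again $O(H^2)$.

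I expect no real obstacle here; the only thing to be careful about is making sure the geometric-series argument is phrased using nilpotency of $x$ in $\ring$ (so that the sum is genuinely finite and equals the inverse on the nose, not just in a limit), and that the $O(H^2)$ bound is attributed to a concrete procedure rather than left implicit. A clean alternative for the existence half is to invoke that $\ring$ is a local ring with maximal ideal $(x)$, so units are exactly the complement of $(x)$; but since the paper is elementary and self-contained, I would give the explicit geometric-series inverse instead.
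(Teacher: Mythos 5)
Your proof is correct, but the existence half takes a genuinely different route than the paper. The paper writes out the equation $\alpha\beta = 1$ coefficient by coefficient and observes it is a lower-triangular linear system in the unknowns $\beta_h$, solvable by forward substitution: $\beta_0 = 1/\alpha_0$ and $\beta_h = -\tfrac{1}{\alpha_0}\sum_{h'<h}\alpha_{h-h'}\beta_{h'}$. That single computation simultaneously yields existence (the system has a solution), uniqueness (each $\beta_h$ is forced), and the $O(H^2)$ bound (the double sum), all in two lines. You instead factor out the unit $a_0$, exhibit $1 - x\delta$ as a unit via the truncated geometric series using nilpotency of $x$, and handle uniqueness and timing separately. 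Conceptually your argument is more illuminating (it makes visible that $\F[x]/(x^H)$ is local with maximal ideal $(x)$), and it generalizes more readily to other local rings; the paper's argument is more economical and directly algorithmic.

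One small wobble worth flagging: the timing claim for the geometric-series route is not airtight as written. Summing $\sum_{k=0}^{H-1}(x\delta)^k$ by Horner (or by accumulating powers) naturally costs $O(H)$ ring multiplications at $O(H^2)$ field operations each, i.e.\ $O(H^3)$; the observation that $(x\delta)^k$ is supported on degrees $\ge k$ saves only a constant factor per step, not an order. The claim that $x\delta$ has a ``sparse leading structure'' does not hold in general, since $\delta$ can be dense. So if you go this route you should either give a sharper accounting or, as you already suggest, fall back to the forward-substitution formulation — which is exactly the paper's proof and does give $O(H^2)$ cleanly. Since you explicitly offer that alternative, the proposal as a whole is sound.
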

\begin{proof}
Clearly, if $\alpha$ is a multiple of $x$ it cannot have a multiplicative inverse because 1 is not a multiple of $x$.
Otherwise, $\alpha=\sum_{0\le h<H} \alpha_h x^h$ for $\alpha_h\in\F$ and $\alpha_0\ne 0$.
Suppose $\beta:=\sum_{0\le h<H} \beta_h x^h,\ \beta_h\in\F$ satisfies $\alpha\beta=1$. Then $\beta_0=\frac{1}{\alpha_0}$ and $\forall h\ge 1: \beta_h=-\frac{1}{\alpha_0}\paren{\sum_{h'<h} \alpha_{h-h'}\beta_{h'}}$.
\end{proof}

We sketch the following procedure for row-reduction on $M$:
\begin{itemize}
    \item Initialize a row index $r\gets 0$.
    \item Find an invertible element of $M$; move it to row $r$ using a row swap, normalize the row, and denote this as a pivot; then eliminate all other elements in the same column as the pivot and increment $r$.
    \item Repeat this process, but now with the first row frozen, i.e., it cannot be modified or moved.
    \item If at some point we cannot find an invertible element, we instead try to find an element of the form $x\gamma$ for invertible $\gamma$, normalize it to $x$, and then only eliminate terms that are multiples of $x$; if we cannot find such an element, repeat for $x^2\gamma$, and so on.
    \item At the end, remove any all-zeros rows and set $V$ to this matrix.
    \item Throughout this entire process, update $U$ as we perform each row operation; at the end, keep only its leftmost $r$ columns.
\end{itemize}

$V$ will have a form similar to row echelon form (not necessarily \textit{reduced} row echelon form), up to permutation of columns, except that the leading terms are powers of $x$.

To prove that such $V$ has the minimum number of rows for matrix rank factorization, we first simplify it further with column operations. For each leading term $x^h$, all other elements on the same row must be multiples of $x^h$ because we could have only had $x^h$ as a leading term if the non-frozen rows had no more terms $x^{h'}\gamma$ for $h'<h$ and invertible $\gamma$. This property means that $V$ can be reduced to the form $\M{I_{b_0} \\ &\ddots \\ &&x^{H-1}I_{b_{H-1}}}$ using column operations, up to the presence of extra all-zeros columns.

We now prove that the rank of this reduced matrix is equal to the number of nonzeros on its diagonal \citep{austin}:
\begin{lemma}
\label{xpow-diag}
Over $\F[x]/(x^H)$, $\rk{x^h I_n}=n$ for all $h<H$.
\end{lemma}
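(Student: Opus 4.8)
The plan is to prove the two inequalities $\rk{x^h I_n}\le n$ and $\rk{x^h I_n}\ge n$ separately; throughout write $\mathcal{R}=\F[x]/(x^H)$ and assume $n\ge 1$, the case $n=0$ being vacuous. The upper bound is immediate, since $x^h I_n=(x^h I_n)\,I_n$ is a factorization of $x^h I_n$ through $\mathcal{R}^n$, so there is a matrix rank factorization with inner dimension $n$.

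For the lower bound, suppose $x^h I_n=UV$ with $U\in\mathcal{R}^{n\times r}$ and $V\in\mathcal{R}^{r\times n}$; the goal is to deduce $r\ge n$. The feature of $\mathcal{R}$ I would exploit is that it is a \emph{chain ring}: its only ideals are $(1)\supsetneq(x)\supsetneq\cdots\supsetneq(x^H)=(0)$, so an element is divisible by $x^a$ precisely when it is $0$ or has $x$-valuation $\ge a$. From this I would first record the standard fact that $U$ admits a \emph{Smith normal form} over $\mathcal{R}$: there exist $P\in\GL{n}{\mathcal{R}}$ and $Q\in\GL{r}{\mathcal{R}}$ with $PUQ=D$, where $D$ is the $n\times r$ matrix carrying powers $x^{a_1},\dots,x^{a_{\min(n,r)}}$ on its main diagonal and zeros elsewhere. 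This is the usual Gaussian-elimination argument: if $U\ne 0$, choose an entry of least $x$-valuation $a$ (it equals $x^a$ times a unit), move it to position $(1,1)$ by row and column swaps, rescale its row by the inverse of that unit, and then clear the rest of the first row and column by elementary operations — legal because every other entry, having valuation $\ge a$, is a multiple of $x^a=U_{1,1}$ — and recurse on the remaining $(n-1)\times(r-1)$ block. Granting this, rewrite $x^h I_n=UV$ as $x^h P=P(P^{-1}DQ^{-1})V=D\,(Q^{-1}V)$. If $r<n$, then $D$, being $n\times r$ in Smith form, has every row past the $r$-th equal to zero, so in particular the $n$-th row of $D\,(Q^{-1}V)=x^h P$ vanishes, i.e.\ $x^h$ annihilates the $n$-th row of $P$. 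But $P$ is invertible over $\mathcal{R}$, so $P\bmod x$ is invertible over $\F$ and has no zero row; hence some entry $P_{n,j}$ is a unit of $\mathcal{R}$, and since $h<H$ the element $x^h P_{n,j}$ (a unit times $x^h$, whose lowest-degree term is a nonzero multiple of $x^h$) is nonzero in $\mathcal{R}$ — contradicting that $x^h$ kills the $n$-th row of $P$. Therefore $r\ge n$, and with the upper bound this gives $\rk{x^h I_n}=n$.

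The only real work is establishing the Smith normal form over $\mathcal{R}$, but for a chain ring this is entirely routine; the sole nontrivial input is the equivalence of ``$x$-valuation $\ge a$'' with ``divisible by $x^a$'', which is exactly what legitimizes the row/column clearing once a minimal-valuation pivot sits at $(1,1)$. If one prefers to sidestep Smith normal form altogether: setting $W:=\mathrm{im}(U)\subseteq\mathcal{R}^n$, one has $x^h\mathcal{R}^n=\mathrm{im}(UV)\subseteq W$, so the socle of $W$ contains the socle $x^{H-1}\mathcal{R}^n$ of $x^h\mathcal{R}^n$, which is $\F$-isomorphic to $\F^n$; and since every finitely generated $\mathcal{R}$-module is a direct sum of cyclics $\mathcal{R}/(x^{k_i})$ — for which the number of summands equals both the minimal number of generators and the $\F$-dimension of the socle — the module $W$ requires at least $n$ generators, while being generated by the $\le r$ columns of $U$ forces $r\ge n$.
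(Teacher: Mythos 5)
Your proof is correct, but it takes a genuinely different route from the paper. The paper lifts the supposed rank-$<n$ factorization $A_0A_1^\intercal$ to the polynomial ring $\F[x]$, observes that $\det(A_0A_1^\intercal)=0$ there (row-reducing $A_0$, which has fewer columns than rows, produces a zero row), and then contradicts this by computing $\det(x^hI_n+x^HX)$ via the Leibniz formula: the main-diagonal term $x^{hn}$ has strictly lower degree than every other term, so the determinant is a nonzero polynomial. Your argument instead stays inside $\mathcal{R}=\F[x]/(x^H)$ and exploits that it is a chain ring: the Smith-normal-form version puts $U$ into diagonal form $D$ with pivots $x^{a_i}$ and reads off the contradiction from the vanishing of the last rows of $x^hP=D(Q^{-1}V)$ against the invertibility of $P$; the socle version counts minimal generators of $\mathrm{im}(U)$ via $\dim_\F$ of the socle. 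Both of your arguments are sound (the clearing step is legitimate precisely because the pivot has minimal $x$-valuation, and subtraction cannot drop valuation below that of the pivot, so the recursion and non-decreasing diagonal exponents go through; the socle computation $W\cap x^{H-1}\mathcal{R}^n=x^{H-1}\mathcal{R}^n\cong\F^n$ is also right).

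The trade-off is characteristic: the paper's proof is short, elementary, and self-contained — nothing beyond the Leibniz formula and the observation that a tall-thin matrix over a commutative ring has determinant-zero Gram product — whereas yours invokes the structure theory of modules over a local Artinian chain ring (Smith normal form, or the classification into cyclic summands and the equality of socle dimension with minimal generator count). Your approach is more conceptual and generalizes verbatim to any chain ring (indeed to any local ring over which Smith normal form holds), making visible \emph{why} the statement is true rather than producing an ad hoc contradiction. The paper's approach is lighter on prerequisites and arguably quicker to verify from scratch, which suits the paper's elementary, algorithmic register. Either proof would serve; yours also hands you, for free, the machinery the paper develops separately (the reduced form $\mathrm{Diag}(I_{b_0},xI_{b_1},\dots,x^{H-1}I_{b_{H-1}})$ and its rank) since that is just Smith normal form for $\mathcal{R}$.
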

\begin{proof}
Suppose there was a rank-$<n$ CPD of $x^h I_n=\cpdeval{A_0,A_1}=A_0 A_1^\intercal$. If we compute this CPD over the larger polynomial ring $\F[x]$, then $A_0 A_1^\intercal=x^h I_n + x^H X$ for some arbitrary $X\in\F[x]^{n\times n}$.

We have $\det\paren{A_0 A_1^\intercal}=0$, since row-reducing $A_0$ is guaranteed to yield at least one row of all zeros (due to having fewer columns than rows), so there exists a sequence of row operations on $A_0 A_1^\intercal$ that also produces an all-zeros row.

However, $\det\paren{x^h I_n + x^H X}=x^{hn} + o(x^{hn})$, where $o(x^p)$ hides terms containing $x$ raised to a power $>p$; this is because in the Leibniz formula of the determinant, the term corresponding to the main diagonal is $x^{hn}$, whereas all other terms have $x$ raised to a strictly higher power. Hence, a contradiction.
\end{proof}

\begin{lemma}
Over $\F[x]/(x^H)$, let $N:=\M{I_{b_0} \\ &\ddots \\ &&x^{H-1}I_{b_{H-1}}}$; then $\rk{N}=\sum_h b_h$.
\end{lemma}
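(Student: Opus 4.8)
The plan is to prove both inequalities $\rk{N}\le\sum_h b_h$ and $\rk{N}\ge\sum_h b_h$ separately, with the second being the substantive direction.

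For the upper bound, observe that $N$ is block-diagonal with blocks $x^h I_{b_h}$, and each such block trivially has a rank-$b_h$ CPD: writing $x^h I_{b_h}=\cpdeval{x^h I_{b_h}, I_{b_h}}$ (or distributing the $x^h$ as $x^{h_1}$ on one factor and $x^{h_2}$ on the other with $h_1+h_2=h$, whichever is cleanest) gives a decomposition using $b_h$ rank-one terms. Concatenating the factor matrices of these decompositions across all $h$, padded with zero rows/columns in the appropriate block positions, yields a CPD of $N$ of rank $\sum_h b_h$. This shows $\rk{N}\le\sum_h b_h$.

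For the lower bound, the cleanest route is to reduce to Lemma \ref{xpow-diag}. First I would handle the case where all $b_h=0$ except one, which is exactly Lemma \ref{xpow-diag}. In general, suppose for contradiction that $N=\cpdeval{A_0,A_1}=A_0 A_1^\intercal$ with $A_0, A_1$ having fewer than $\sum_h b_h$ columns. The idea is to ``extract'' one of the diagonal blocks: fix any index $h^*$ with $b_{h^*}>0$, and consider multiplying $N$ by $x^{H-1-h^*}$. Since $x^{H-1-h^*}\cdot x^h I_{b_h}=0$ whenever $h>h^*$ and equals $x^{H-1-h^*+h}I_{b_h}$ when $h\le h^*$, after also restricting to the row/column index set of blocks $0,\dots,h^*$ we can try to isolate the $x^{H-1}I_{b_{h^*}}$ part. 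A more robust approach: argue by induction on $\sum_h b_h$ (or on the number of nonzero $b_h$). Using the row-reduction/column-reduction structure already established, if $b_0>0$ then $N$ contains an invertible pivot, and one can peel off a rank-one term $e_i\otimes e_i$ (for $i$ in the top block) from any CPD of $N$ to get a CPD of a smaller matrix $N'$ of the same block form with $b_0$ decreased by one, reducing the rank by exactly one; if $b_0=0$, factor out $x$ from every entry — i.e., $N=x\cdot N''$ where $N''$ has block sizes shifted down — and relate $\rk{N}$ to $\rk{N''}$, noting that over $\F[x]/(x^H)$ multiplying by $x$ sends $\F[x]/(x^H)$ onto $\F[x]/(x^{H-1})$-worth of information. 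The base case $\sum_h b_h=0$ is trivial ($N=0$), and the single-block case invokes Lemma \ref{xpow-diag} directly.

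The main obstacle will be making the ``peel off a rank-one term'' and ``factor out $x$'' steps rigorous over the non-domain ring $\F[x]/(x^H)$, where rank is not simply additive over direct sums a priori and where zero divisors complicate the usual arguments. I expect the key technical lemma to be: if $M\oplus M'$ denotes the block-diagonal matrix built from $M$ and $M'$, then $\rk{M\oplus M'}\ge\rk{M}+\rk{M'}$ fails in general but holds when one of the summands is of the form $x^h I_b$, exploiting the determinant/valuation argument from Lemma \ref{xpow-diag} lifted to $\F[x]$. Alternatively, and perhaps most cleanly, I would lift any CPD $N=A_0 A_1^\intercal$ to $\F[x]$ so that $A_0 A_1^\intercal = N + x^H X$, then use the multiplicativity of the ``leading-term valuation'' of minors: the $(\sum_h b_h)\times(\sum_h b_h)$ minor of $N$ corresponding to the diagonal has $x$-valuation exactly $\sum_h h\cdot b_h$, while any such minor of $A_0 A_1^\intercal$ with $A_0,A_1$ of inner dimension $<\sum_h b_h$ vanishes, yet the $x^H X$ correction only affects terms of valuation $\ge H > \sum_h h\cdot b_h$ (using $h\le H-1$ so $\sum h\, b_h$ could be as large as $(H-1)\sum b_h$ — here I must be careful, since this can exceed $H$; so the minor argument needs the sharper observation that the diagonal minor of $N$ alone already forces a contradiction after choosing the right sub-minor, namely one avoiding enough high-power rows). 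Resolving this valuation bookkeeping is the crux; once it is set up correctly, the contradiction with $\det = 0$ closes the proof exactly as in Lemma \ref{xpow-diag}.
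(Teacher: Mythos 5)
There is a real gap: you never actually complete the lower bound. You sketch three candidate routes (extracting a single block, an induction peeling off rank-one terms or factoring out $x$, and a lifted minor/valuation argument), but you acknowledge unresolved obstacles in each, and none of them is carried through. In particular, the ``peel off a rank-one term'' step is exactly the kind of move that is subtle over $\F[x]/(x^H)$, and you correctly note that the valuation bookkeeping can break because $\sum_h h\,b_h$ may exceed $H$; you do not resolve either issue.

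The idea you are missing is simpler than any of your three routes: scale each diagonal block $x^h I_{b_h}$ by its own factor $x^{H-1-h}$, so that every block becomes $x^{H-1}I_{b_h}$ and the whole matrix becomes $x^{H-1}I_n$ with $n=\sum_h b_h$. Scaling a row (or column) of a matrix $M$ by any ring element $\lambda$ cannot increase $\rk{M}$, because given any CPD $M=UV$ you can absorb $\lambda$ into the corresponding row of $U$ to obtain a CPD of the scaled matrix with the same number of terms. Hence $\rk{N}\ge\rk{x^{H-1}I_n}=n$ by Lemma \ref{xpow-diag}, and you are done. Your first sketch is close — you multiply by $x^{H-1-h^*}$ — but you apply a single uniform power to the whole matrix, which annihilates the blocks with $h>h^*$ and forces you into the awkward ``restrict and isolate'' step; using a block-dependent power avoids losing any of the diagonal and lands you directly on the hypothesis of Lemma \ref{xpow-diag}.
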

\begin{proof}
Clearly $\rk{N}\le \sum_i b_i$. To show the lower bound, multiply each $x^h I_{b_h}$ block by $x^{H-1-h}$ so that all terms on the diagonal are $x^{H-1-h}$; doing so cannot increase the rank. Then by Lemma \ref{xpow-diag}, this matrix has rank $\sum_i b_i$.
\end{proof}

Pseudocode for border CPD search is provided in Algorithm \ref{border-dfs}.
To bound its running time, note that a call will only recurse if all axis-ranks $r_d$ are $\le R$; if so, it branches to $|\F|^{H\sum_d r_d}$ children that each replace the parameter $R$ with $R-1$. Thus, if we now define $r_d$ to be the axis-$d$ rank of the original target tensor $T$, the total running time is $O^*\paren{|\F|^{H\sum_{1\le r\le R} \sum_d \min(r, r_d)}}$.

\begin{algorithm}
    \caption{Depth-first search algorithm for rank-$\le R$ CPD over border ring $\F[x]/(x^H)$}
    \label{border-dfs}
    \begin{algorithmic}[1]
    \newcommand{\target}{\mathrm{target}}
    \Require{
        $T_\target\in\paren{\F[x]/(x^H)}^{n_0\times\dots\times n_{D-1}}$,
        $R\in\Z_{\ge 0}$
    }
    \Ensure{
        returns a rank-$\le R$ CPD of $T_\target$, if one exists; else, returns null
    }
    \Function{dfs}{$T_\target, R$}
        \Let{$T, Q_0,\dots,Q_{D-1}, r_0,\dots,r_{D-1}$}{\Call{border\_axis\_reduce}{$T_\target$}}
        \If{$\exists d | r_d>R$}
            \State \Return{\textbf{null}}
        \EndIf
        \If{$\exists d | r_d=0$}
            \State \Return{$\paren{\M{\ }}_{0\le d<D}$}
        \EndIf
        \For{$0\le d<D,\ u_d\in\paren{\F[x]/(x^H)}^{r_d}$}
            \Let{$A$}{\Call{dfs}{$T-\bigotimes_d u_d ,\ R-1$}}
            \If{$A\ne\textbf{null}$}
                \State \Return{ $\paren{\paren{Q_d^{-1}}_{:,:r_d} \MA{c|c}{A_d & u_d}}_{0\le d<D}$ }
            \EndIf
        \EndFor
      \State \Return{\textbf{null}}
    \EndFunction
  \end{algorithmic}
\end{algorithm}

\section{Conclusion and Next Steps}
We have presented an $O^*\paren{|\F|^{(R-\max_d n_d)\paren{\sum_d n_d}+\max_d n_d}}$-time algorithm for finding a rank-$\le R$ CPD (or proving such a CPD does not exist) of a tensor with shape $n_0\times\dots\times n_{D-1}$ and elements in a finite field $\F$,
as well as a $O^*\paren{|\F|^{H\sum_{1\le r\le R} \sum_d \min(r,n_d)}}$-time algorithm to solve the same problem for a tensor with elements in the ring $\F[x]/(x^H)$.
Both algorithms are substantially faster than brute force, with an important reason why being axis-reduction; both algorithms also use polynomial space.

There is a large gap between the time complexity for border CPD and what it could be if our algorithm for exact CPD generalized to the border analog. The main issue is that reduced row-echelon form does not always exist over $\F[x]/(x^H)$ (where we are only allowed row operations) because a leading term that is not equal to 1 may have elements above it that are not multiples of itself; an example is $\M{1&1\\&x}$ for $H\ge 2$.
A corollary is that a full-rank square matrix is not necessarily invertible over border rings, in stark contrast to fields.

One possible way to close this gap is to analyze what happens when $R=\max_d r_d$, where $r_d$ is the axis-$d$ rank of the original target tensor. In this situation, we suspect that only $o\paren{|\F|^{H\sum_d r_d}}$ many children of the root call in Algorithm \ref{border-dfs} will recurse into their own children, because most children calls fail to reduce any axis-ranks of the target tensor. Studying this problem further could result in tighter time bounds without changing the algorithm.

\section{Acknowledgments}
We thank Prof. Virginia Vassilevska Williams for her continual mentorship throughout the years on this research topic, and Austin Conner for resolving how to calculate the rank of a general matrix over a border ring. We also thank Prof. Erik Demaine and Ani Sridhar for motivating us to pivot our research from fast matrix multiplication tensors to general tensors.

\end{document}